\newtheorem{theorem}{Theorem}[section]
\theoremstyle{definition}
\theoremstyle{remark}
\newtheorem{remark}[theorem]{Remark}
\theoremstyle{Conjecture/open problem}
\theoremstyle{assumption}
\theoremstyle{conjecture}
\def\1{\mathbf{1}}
\def\cA{\mathcal{A}}
\def\cC{\mathcal{C}}
\def\cX{\mathcal{X}}
\def\R{\mathbb{R}}
\def\N{\mathbb{N}}
\numberwithin{equation}{section}
\begin{document}

\title{
Impossibility results for equating the Youden Index with average scoring rules and Tjur $R^2$-like metrics}

\author[Linard Hoessly]{Linard Hoessly}

\address{Data Center of the Swiss Transplant Cohort Study, University hospital Basel, Basel, 4031 Switzerland}
\email{linard.hoessly@hotmail.com}

\date{}

\keywords{}

\date{}

\begin{abstract} 
We consider the Youden index fas well as measures evaluating predicted probabilities for the maximum-likelihood estimate of a logistic regression model with predictor the classifier. We give impossibility results showing that the Youden index can not equal any average of a real scoring rule nor any metric averaging over binary outcomes (0s and 1s) for any continuous real-valued scoring rule. This shows the obstractions of such potential equivalences and highlights the distinct roles these metrics play in diagnostic assessment.
 \end{abstract}
 \subjclass[2010]{62C99, 62H17, 62P10,, 62J12,}
\keywords{Youden Index, Tjur's R-squared, Binary Classification, Diagnostic Test Accuracy, Scoring Rules, Logistic Regression, $2\times 2$ Contingency Tables, Impossibility Result, Equivalence Metrics,}

\maketitle
\color{black}
\section{Introduction}
Interest in classification is ubiquitious across fields like medicine, epidemiology, or machine learning \cite{hand_classification}. Given past data, the general goal is to find a rule to classify future examples. In medicine and epidemiology classification often involves distinguishing between disease and non-disease states. This can then be crucial for guiding individual treatments or informing public health actions. Classifiers are evaluated with metrics like the true positive rate (TPR), true negative rate (TNR), the Youden index, or other metrics, all aiming to evaluate aspects of diagnostic accuracy \cite{classification_powers,hand_classification}. In particular, the Youden index for a classifier is defined as 
\begin{equation}\label{Youden}
Youden=TPR+TNR-1
\end{equation}

On the other hand, binary classifiers are often built using logistic regression with a threshold probability. As there is no direct analog of an R-squared for logistic regression, various pseudo-R-squared metrics were introduced to judge the quality of the logistic model \cite{harrell2015regression}. Tjur's $R^2$ coefficient \cite{Tjur} measures a model's discriminatory power in logistic regression through the average difference in predicted probabilities between the binary outcomes, making it a popular tool to evaluate a models  predictive capacity \cite{harrell2015regression}. Another common tool in model evaluation assessing prediction models are scoring rules \cite{scoring_rule}, which compare probabilistic predictions to observed outcome. In practice, an average over scoring rules is considered. 

Let 
\begin{equation}\label{n}
[n]=\{1,\cdots,n\}
\end{equation} 
be the set of the first $n$ natural numbers.
To be more concrete, assume data $\cX=\{(x_i,y_i)| \ i\in [n]\}$ are given, where the $x_i$ are the predictors and $y_i$ the binary outcomes. After fitting a logistic regression model, we denote the predicted probability of observation $i$ being $1$ by $\hat{p_i}$. Furthermore let $\cX_1=\{i|(x_i,y_i)\in \cX, y_i=1\}$ and $\cX_0=\{i|(x_i,y_i)\in \cX, y_i=0\}$. Then the Tjur's $R^2$ is given by
\begin{equation}\label{Tjur}
R^2_{Tjur}((\hat{p_i},y_i)_{i\in[n]})=\frac{1}{|\cX_1|}\sum_{i\in\cX_1}\hat{p_i}-\frac{1}{|\cX_0|}\sum_{i\in\cX_0}\hat{p_i}.
\end{equation}
An equivalence between the Youden index and Tjur $R^2$ for a logistic regression with the classification as independent variable was claimed in \cite{HUGHES2017801} and recently shown not to hold \cite{HOESSLY202528}.
 Denote all possible distributions on the set $\{0,1\}$ by $\Delta_1$, i.e.,
$$
\Delta_1=\{(p_0,p_1)\in\R^2_{\geq 0}|p_0+p_1=1\}.
$$
A real-valued scoring rule for binary outcomes is a function that maps a probability on the binary set and a binary outcome to the reals, i.e.
\begin{equation}\label{score}
S(\cdot,\cdot):\Delta_1\times \{0,1\}\to \R.
\end{equation}
These are determined by the two functions $S(\cdot,0):\Delta_1\to \R, S(\cdot,1):\Delta_1\to \R$,  \cite[$\S$ 3.1]{scoring_rule}. Furthermore we say $S(\cdot,\cdot)$ is continuous, if both $S(\cdot,0)$ and $S(\cdot,1)$ are.
The average over the scoring rule $S(\cdot,\cdot)$ is given as
\begin{equation}\label{gen_scoring}
Av_{S}((\hat{p_i},y_i)_{i\in[n]})=\frac{1}{n}\sum_{i\in[n]}S(\hat{p_i},y_i).
\end{equation}

\color{black} As a generalisation of Tjur's $R^2$ \eqref{Tjur}, we consider a generalised evaluation over a scoring rule $S(\cdot,\cdot)$ as 
\begin{equation}\label{gen_Tjur}
Ev_{S}((\hat{p_i},y_i)_{i\in[n]})=\frac{1}{|\cX_1|}\sum_{i\in\cX_1}S(\hat{p_i},1)-\frac{1}{|\cX_0|}\sum_{i\in\cX_0}S(\hat{p_i},0).
\end{equation}
This generalises Tjur's $R^2$ of \eqref{Tjur} as we obtain \eqref{Tjur} if $S(\hat{p},0)=S(\hat{p},1)=\hat{p}$ in \eqref{gen_Tjur}.

Our main question of interest is the investigation of a potential equivalence between 
\begin{enumerate}
\item
the Youden index for $2\times 2$ contingency tables and
\item a probability prediction obtained trough a logistic regression with the state of the classifier evaluated via either average real scoring rule \eqref{gen_scoring} or Tjur $R^2$-like metric  \eqref{gen_Tjur} for a real continuous scoring rule.
\end{enumerate}
 We will show that such a correspondences cannot hold, even when we take a difference between the average predictions over 0s and 1s measured with a continuous scoring function from \eqref{gen_Tjur}.
\subsection*{Acknowledgments}
We thank Lucia de Andres Bragado for helpful discussions.
\section{Notation and setting}
Consider data $\cX=\{(x_i,y_i)| \ i\in 1,\cdots, n\}$ where the $x_i$ are the predictors and $y_i$ the binary outcomes. Assume we have a (deterministic) classifier giving for each observation $i$ a value $f_c(x_i)=\hat{y_i}\in\{0,1\}$, as well as a predicted probability $f_p(x_i)=\hat{p_i}\in\Delta_1$. 

Of $n$ observations, we collect a summary in a \( 2 \times 2 \) contingency table with the numbers of when the classifier $f_c$ predicted a $0/1$ as well as when it was right and wrong:
\[
\begin{array}{c|c|c}
  & Y = 1 & Y = 0 \\
\hline
f_c(X) = 1 & a & b \\
f_c(X) = 0 & c & d \\
\end{array}
\]
The Youden index can be rewritten as
\begin{equation}\label{Youden2}
Youden(a,b,c,d)=\frac{a}{a+c}+\frac{d}{b+d}-1
\end{equation}

Consider a logistic regression for the data $\cX=\{(f_c(x_i),y_i)| \ i\in 1,\cdots, n\}$ with independent variable the classification result and dependent variable the true outcome. This logistic regression with maximum-likelihood estimate will hence give the following prediction probabilities 
$$
g_p(x_i):=\begin{cases}
\frac{a}{a+b}, \text{ if } f_c(x_i)=1\\
\frac{c}{c+d}, \text{ if } f_c(x_i)=0.
\end{cases}
$$
We are interested in the question of whether there exists a real scoring rule such that an equality like
\begin{equation}\label{equa_youden_av}
Youden=Av_{S}((\hat{p_i},y_i)_{i\in[n]})
\end{equation}
or
\begin{equation}\label{equa_youden_r2}
Youden=Ev_{S}((\hat{p_i},y_i)_{i\in[n]})
\end{equation}
holds. We will show impossibility results for both, hence that such a real scoring rule cannot exist. 
\begin{remark}\label{eq_cont}
In fact, in the case of \eqref{equa_youden_r2}, we will show this for the space of continuous function $\Delta_1\to\R$. As there is a homeomorphism $\Delta_1\simeq[0,1]$, it is sufficient to show this for all continuous functions $[0,1]\to\R$, whose set we denote by $\cC([0,1],\R)$. $\cC([0,1],\R)$  is an infinite dimensional vector space containing the identity function which inserted in \eqref{gen_Tjur} for $S(\cdot,0)$ and $S(\cdot,0)$ gives Tjur's $R^2$. However, it also contains, e.g., any univariate polynomial of an arbitrary degree.
\end{remark}
Before going further, we remark the following.
\begin{remark}
The expressions of Youden index \eqref{Youden} is only defined for $a+c\neq 0$ and $b+d\neq0$. We restrict to this in the following, and denote the set of elements we exclude by $\cC$ and the one we consider by $\cA$, i.e.,
$$\cC:=\{(a,b,c,d)\in\N_{\geq0}^4|a+c= \text{ or }b+d=0\},
$$
$$ \cA=\N_{\geq0}^4\setminus \cC.
$$
\end{remark}
To be more concrete, we write out the expression of $Av_{S}((g_p(x_i),y_i)_{i\in[n]})$ and $Ev_{S}((g_p(x_i),y_i)_{i\in[n]})$, where the first is given by
\begin{equation}\label{gen_score_applied}
Av_{S}((g_p(x_i),y_i)_{i\in[n]})=\frac{1}{a+b+c+d}\left(a\cdot S(\frac{a}{a+b},1)+c\cdot S(\frac{c}{c+d},0)+b\cdot S(\frac{a}{a+b},0)+d\cdot S(\frac{c}{c+d},1)\right)
\end{equation}
and the second by
\begin{equation}\label{gen_Tjur_applied}
Ev_{S}((g_p(x_i),y_i)_{i\in[n]})=\frac{1}{a+c}\left(a\cdot S(\frac{a}{a+b},1)+c\cdot S(\frac{c}{c+d},0)\right)-\frac{1}{b+d}\left(b\cdot S(\frac{a}{a+b},0)+d\cdot S(\frac{c}{c+d},1)\right)
\end{equation}
The question we want to answer is the following. Is there a real-valued continuous scoring function $S(\cdot,\cdot)$ such that either
\begin{itemize}
\item the evaluation in \eqref{gen_scoring} equals the Youden index \eqref{Youden2}, or  
\item the evaluation in \eqref{gen_Tjur_applied} equals the Youden index \eqref{Youden2}, 
\end{itemize}
for all $(a,b,c,d)\in\cA$?
\section{Results}
\begin{theorem}
There is no real-valued scoring rule $S(\cdot,\cdot)$ for binary outcomes such that its average in \eqref{gen_score_applied} equals the Youden index \eqref{Youden2} for all $(a,b,c,d)\in\cA$.
\end{theorem}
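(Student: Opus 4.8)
The plan is to argue by contradiction. Suppose a real-valued scoring rule $S$ satisfies $Av_{S}=Youden$ for every $(a,b,c,d)\in\cA$. The structural fact I would exploit is that, by \eqref{gen_score_applied}, the value of $Av_{S}$ depends on $S$ only through its evaluations at the two predicted probabilities $p=\tfrac{a}{a+b}$ and $q=\tfrac{c}{c+d}$, namely $S(p,0),S(p,1),S(q,0),S(q,1)$. Because these four numbers are frozen once $p$ and $q$ are fixed, I would look for a family of tables along which $p$ and $q$ remain constant while the integer weights $a,b,c,d$ (and hence the Youden index) still vary; on such a family $Av_{S}$ is pinned to a rigid rational form in those weights, and the goal is to show this form cannot reproduce \eqref{Youden2}.

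The concrete family I would use is $b=0$ and $c=d=1$, with $a\ge 1$ ranging over the positive integers, so that $(a,0,1,1)\in\cA$. Here the predicted probabilities lock at $p=1$ and $q=\tfrac12$, the Youden index \eqref{Youden2} becomes $\tfrac{a}{a+1}+1-1=\tfrac{a}{a+1}$, and the $c$- and $d$-terms of \eqref{gen_score_applied} contribute exactly $S(\tfrac12,0)+S(\tfrac12,1)$ independently of $a$. Abbreviating $s:=S(1,1)$ and $\sigma:=S(\tfrac12,0)+S(\tfrac12,1)$, the average collapses to $Av_{S}=\tfrac{as+\sigma}{a+2}$, so the assumed equality reads $\tfrac{as+\sigma}{a+2}=\tfrac{a}{a+1}$ for all positive integers $a$.

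Clearing denominators converts this into $(s-1)a^{2}+(s+\sigma-2)a+\sigma=0$ holding for infinitely many $a$, hence identically in $a$; matching coefficients forces $s=1$, $\sigma=0$, and $s+\sigma=2$ at once, which is absurd. (Equivalently, substituting $a=1,2,3$ yields three linear equations in $(s,\sigma)$ with no common solution.) This contradiction establishes the theorem, and I note that the argument is purely algebraic, using no continuity or other regularity of $S$. The one step that requires genuine care---and the main obstacle---is the choice of the test family: one must send a single cell ($b$) to zero so that one predicted probability is pinned at an endpoint, while holding the other cell-ratio fixed ($c=d$), since this is precisely what decouples the varying count-weights from the now-constant scoring values and forces the mismatch in functional form between $Av_{S}$ and the Youden index.
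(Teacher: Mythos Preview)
Your argument is correct. Fixing $(a,b,c,d)=(a,0,1,1)$ with $a\ge 1$ indeed lies in $\cA$, pins the predicted probabilities at $1$ and $\tfrac12$, and reduces the hypothetical identity to $\tfrac{as+\sigma}{a+2}=\tfrac{a}{a+1}$; clearing denominators gives a quadratic in $a$ whose coefficient conditions $s=1$, $\sigma=0$, $s+\sigma=2$ are jointly inconsistent. No regularity of $S$ is used.

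This is a genuinely different route from the paper's own proof. The paper instead inserts the degenerate tables $a=b=0$ (with $c,d\ge1$) and $c=d=0$ (with $a,b\ge1$), both of which have Youden index $0$; these force, at every rational $x\in(0,1)$, the two relations $S(x,0)=-\tfrac{1-x}{x}S(x,1)$ and $S(x,0)=-\tfrac{x}{1-x}S(x,1)$, whence $S(x,1)=0$ for $x\neq\tfrac12$, contradicting any table with nonzero Youden index and predicted probabilities away from $\tfrac12$. So the paper first \emph{determines} $S$ (essentially to be zero) and then exhibits a conflict, whereas you bypass this by freezing the two probabilities and letting only the integer weights vary, obtaining a direct algebraic inconsistency from three test values of $a$. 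Your approach has the small advantage of never evaluating $S$ at the indeterminate argument $\tfrac{0}{0}$ (the paper implicitly treats the $a$- and $b$-terms as absent when $a=b=0$), and it needs only finitely many tables; the paper's approach, in exchange, yields the stronger structural information that any putative $S$ would have to vanish on a dense set of arguments.
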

\begin{proof}
We prove it by contradiction. Hence assume the following equation holds for $(a,b,c,d)\in\cA$
\begin{equation}\label{equ_contr}
\frac{a}{a+c}+\frac{d}{b+d}-1=\frac{1}{a+b+c+d}\left(a\cdot S(\frac{a}{a+b},1)+c\cdot S(\frac{c}{c+d},0)+b\cdot S(\frac{a}{a+b},0)+d\cdot S(\frac{c}{c+d},1)\right)
\end{equation}
for some scoring function $S(\cdot,\cdot)$.
We next go through some particular cases of \eqref{equ_contr} to derive a contradiction.\\

$\underline{\textbf{a=b=0}}$:\\
 Then for all $c,d\in\N_{\geq1}$, we get the equation
$$0=cS(\frac{c}{c+d},0)+dS(\frac{c}{c+d},1),$$
which is equivalent to 
\begin{equation}\label{equ_contr1}
S(\frac{c}{c+d},0)=-\frac{d}{c}S(\frac{c}{c+d},1).
\end{equation}
$\underline{\textbf{c=d=0}}$:\\
 Then for all $a,b\in\N_{\geq1}$, we get the equation
$$0=a\cdot S(\frac{a}{a+b},1)+b\cdot S(\frac{a}{a+b},0),$$
which is equivalent to 
\begin{equation}\label{equ_contr2}
S(\frac{a}{a+b},0)=-\frac{a}{b}S(\frac{a}{a+b},1).
\end{equation}
Setting \eqref{equ_contr1} and \eqref{equ_contr2} equal, we obtain
$$\frac{a}{b}S(\frac{a}{a+b},1)=\frac{b}{a}S(\frac{a}{a+b},1),
$$
hence $S(\frac{a}{a+b},1)=0$ which overall is a contradiction as the Youden index can be nonzero.
\end{proof}

\begin{theorem}
There is no continuous real-valued scoring rule $S(\cdot,\cdot)$ for binary outcomes such that \eqref{gen_Tjur_applied} equals the Youden index \eqref{Youden2} for all $(a,b,c,d)\in\cA$.
\end{theorem}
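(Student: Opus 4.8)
The plan is to argue by contradiction, mirroring the structure of the proof of the first theorem but pushing it one step further. Suppose a continuous scoring rule $S$ satisfies \eqref{equa_youden_r2} for all $(a,b,c,d)\in\cA$. Using the homeomorphism $\Delta_1\simeq[0,1]$ of Remark~\ref{eq_cont}, write $g:=S(\cdot,1)$ and $h:=S(\cdot,0)$ as continuous functions on $[0,1]$, and set $u:=\frac{a}{a+b}$, $v:=\frac{c}{c+d}$. Since $\frac{d}{b+d}-1=-\frac{b}{b+d}$, the Youden index \eqref{Youden2} equals $\omega:=\frac{a}{a+c}-\frac{b}{b+d}$, so the assumed identity \eqref{gen_Tjur_applied} reads
\[
\omega=\frac{a\,g(u)+c\,h(v)}{a+c}-\frac{b\,h(u)+d\,g(v)}{b+d}\qquad\text{for all }(a,b,c,d)\in\cA.
\]

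First I would isolate the two diagonal scoring functions. Specializing to $a=b=0$ with $c,d\geq 1$ (legitimate since $(0,0,c,d)\in\cA$, and the undefined value $\frac{a}{a+b}$ enters only through the vanishing coefficients $a,b$, exactly as in the previous proof) collapses the right-hand side to $h(v)-g(v)$ while the Youden index is $0$; hence $g=h$ on every rational $\frac{c}{c+d}\in(0,1)$, and by continuity $g=h$ on all of $[0,1]$. Write $f:=g=h$.

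The crux is the following algebraic collapse after substituting $g=h=f$. Writing $P:=f(u)$, $Q:=f(v)$ and using $\frac{c}{a+c}-\frac{d}{b+d}=-\omega$, the right-hand side simplifies to
\[
P\Big(\tfrac{a}{a+c}-\tfrac{b}{b+d}\Big)+Q\Big(\tfrac{c}{a+c}-\tfrac{d}{b+d}\Big)=(P-Q)\,\omega .
\]
Thus the identity becomes $\omega=(f(u)-f(v))\,\omega$, so that $f(u)-f(v)=1$ whenever $\omega\neq 0$. It remains to realize this with genuinely free $u\neq v$: given distinct rationals $u=\frac{p}{q}$, $v=\frac{r}{s}$ in lowest terms, taking $(a,b,c,d)=(p,\,q-p,\,r,\,s-r)\in\cA$ yields exactly these ratios, and a short computation shows $\omega=0$ precisely when $ps=qr$, i.e. when $u=v$; hence $\omega\neq 0$ here. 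Therefore $f(u)-f(v)=1$ for every pair of distinct rationals in $(0,1)$. Swapping the roles of $u$ and $v$ (realized by $(r,s-r,p,q-p)$) gives $f(v)-f(u)=1$ as well, and adding the two yields $0=2$, the desired contradiction.

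I expect the main obstacle to be the algebraic collapse of the right-hand side to $(f(u)-f(v))\,\omega$: once this is in hand the contradiction is immediate, but it is the step that genuinely uses the two-valued structure of the predictions $g_p$ together with the rewriting of the Youden index as $\omega$. A secondary point requiring care is the realizability claim, namely that distinct ratios $u\neq v$ can be produced while keeping $\omega\neq 0$, though the explicit choice $(p,q-p,r,s-r)$ settles it. Note that continuity is used only to pass from the rationals to all of $[0,1]$ in the step $g=h$; the final contradiction already lives on the rationals.
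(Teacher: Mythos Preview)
Your proof is correct and shares its opening move with the paper: both set $a=b=0$ to obtain $S(\cdot,0)=S(\cdot,1)$ on the rationals in $(0,1)$ and then invoke continuity. From there the paper is more direct: it simply plugs in the two further degenerate cases $b=c=0$ and $a=d=0$, obtaining $S(1,1)-S(0,1)=1$ and (via the symmetry just established) $S(1,1)-S(0,1)=-1$ immediately. Your route instead first collapses the full right-hand side to $(f(u)-f(v))\,\omega$ and then realizes two generic rationals $u\neq v$ in $(0,1)$ with $\omega\neq 0$, deriving the contradiction by swapping $u$ and $v$. What your approach buys is a bit of extra insight: it shows the obstruction is present at \emph{every} pair $u\neq v$, not just at the endpoints, and in fact your final contradiction never leaves the rationals in $(0,1)$, so continuity is used only to tidy up the statement $g=h$ rather than in any essential way. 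What the paper's approach buys is brevity, since the two endpoint substitutions yield the contradiction in two lines without any realizability discussion or the algebraic rewriting of the Youden index as $\omega$.
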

\begin{proof}
We prove it by contradiction. Hence assume the following equation holds for $(a,b,c,d)\in\cA$
\begin{equation}\label{equ_contr}
\frac{a}{a+c}+\frac{d}{b+d}-1=\frac{1}{a+c}\left(a\cdot S(\frac{a}{a+b},1)+c\cdot S(\frac{c}{c+d},0)\right)-\frac{1}{b+d}\left(b\cdot S(\frac{a}{a+b},0)+d\cdot S(\frac{c}{c+d},1)\right)
\end{equation}
for some continuous scoring function $S(\cdot,\cdot)$. Then $S(\cdot,0),S(\cdot,1)$ are continuous functions w.l.o.g. from $[0,1]\to \R$ (see Remark \ref{eq_cont}).
We next go through some particular cases of \eqref{equ_contr} to derive a contradiction.\\

$\underline{\textbf{a=b=0}}$:\\
 Then for all $c,d\in\N_{\geq1}$, we get the equation
$$0=S(\frac{c}{c+d},0)-S(\frac{c}{c+d},1),$$
which is equivalent to 
$$S(\frac{c}{c+d},0)=S(\frac{c}{c+d},1).$$
By the assumed continuity of $S(\cdot,0),S(\cdot,1)$, this equality on $(0,1)\cap\mathbb{Q}$ extends to an equality in $[0,1]$ as $(0,1)\cap\mathbb{Q}$ is dense in $[0,1]$ \cite{rudin1976principles}. Hence the following holds for all
$z\in[0,1]$:
\begin{equation}\label{symm}
S(z,0)=S(z,1).
\end{equation}

$\underline{ \textbf{b=c=0}}$: \\
For all $a,d \in\N_{\geq1}$ we get the equation
\begin{equation}\label{symm2}
1=S(1,1)-S(0,1).
\end{equation}

 $\underline{ \textbf{a=d=0}}$: \\
For all $b,c \in\N_{\geq1}$ we get the equation
$$-1=S(1,0)-S(0,0).$$
By \eqref{symm} the above equation implies the following 
$$-1=S(1,1)-S(0,1),$$
which is a contradiction to \eqref{symm2}.

\end{proof}

\section{Discussion}

We have shown that there does not exist a scoring rule such that its average when applied to predicted probabilities considered equals the Youden index. Also there does not exist a continuous scoring rule such that a Tjur $ R^2 $-like metric for probabilities of the form considered equals the Youden index. However, in a broader context, the question remains open as to whether classification evaluation measures \cite{classification_powers,hand_classification} can equal probabilistic prediction evaluation measures \cite{harrell2015regression} when applied to the maximum likelihood estimate, where the classification result serves as independent variable and the true outcome as dependent variable.  

Potential evaluation measures of interest include pseudo-$ R^2 $ metrics, or other widely used evaluation functions \cite{Tjur,harrell2015regression}. Establishing such a correspondence could provide new insights into classification evaluation measures and inspire the development of novel approaches in this area. Since our result only establishes the impossibility of equating the Youden index with a average score or a Tjur $ R^2$-like metric, it represents just a first step in exploring these relationships.

 \bibliographystyle{plain}

 \bibliography{references} 
 
\end{document}